%
\documentclass[10pt,conference]{IEEEtran}

\usepackage{amsmath,amssymb,amsthm}
\usepackage{graphicx,color}
\usepackage{multirow}
\usepackage{bm}

\usepackage[usenames,dvipsnames]{pstricks}
\usepackage{epsfig}
\usepackage{pst-grad} 
\usepackage{pst-plot} 
\usepackage{algorithmic}
\usepackage{algorithm}
\newtheorem{theorem}{Theorem}

\usepackage{amsmath}

\title{Joint Routing, Scheduling and Power Control Providing Hard Deadline in Wireless Multihop Networks{\fontsize{0.75cm}{1}\footnotemark}}
\author{\IEEEauthorblockN{Satya Kumar V, and Vinod Sharma}
\IEEEauthorblockA{Department of Electrical Communications Engineering,\\Indian Institute of Science, Bangalore \\
Email: satyakumar@ece.iisc.ernet.in, vinod@ece.iisc.ernet.in}
}
\date{}
\begin{document}
\maketitle
\begin{abstract}
We consider  optimal/efficient 
power allocation policies in a single/multihop wireless
network in the presence of hard end-to-end deadline delay constraints on 
the transmitted packets.
Such constraints can be useful for real time voice and video. 
Power is consumed in only  transmission of the data.
We consider the case when the 
power used in transmission is a convex function of the data transmitted. 
We  develop a computationally efficient online algorithm,
which minimizes the average power for the
single hop. 
We model this problem as dynamic program (DP)
and obtain the optimal solution. 
Next, we generalize it to the multiuser, multihop scenario when there are multiple real time streams
with different hard deadline constraints.
\end{abstract}
\begin{keywords}
Dynamic program, hard deadline,  multihop wireless networks, routing, scheduling.
\end{keywords}

\section{Introduction}
The Telecommunication field  is growing at a tremendous pace across the globe
in the last few decades. 
The number of mobile users and mobile internet applications 
are growing exponentially \cite{exponential}. 
Users are looking for various services such as voice calls, video calls, data 
and internet of things (IoT) applications.
One of the  reasons for such a high mobile user growth   is that
people  are using mobile phones for their business purposes as well.
The adoption of mobile technology by citizens of a
country positively affects both the income 
of its citizens as well as the gross domestic product (GDP) of the country.
At the same time growing
carbon footprint of Telecommunication industry has been a
cause of concern and green communications has been the goal
of next generation cellular systems (\cite{angel}, \cite{satya}). Thus, this
paper addresses the question of providing Quality of Service
(QoS) to real time applications while minimizing the transmit
power.

 A specific quality of service may be desired or required for certain types of network 
traffic \cite{walrand}.
For example, hard deadline may be needed for  streaming media,  internet protocol television (IPTV),
   Voice over IP (VoIP), 
 video-conferencing,   safety-critical applications such as remote surgery, and real-time control of machinery.
 However, for TCP file transfers and web browsing, a lower bound on the mean rate provided may be an appropriate
 QoS.

 In the following we survey the related literature. 
 \cite{gold} provided the power allocation policy 
for a single fading link which optimizes the rate. They ignored higher
layer performance measures like queueing delay.
Energy efficient scheduling under mean delay constraint was first addressed in 
 (\cite{rb}, \cite{rarg}).  
Cross layer scheduling algorithms which stabilized
a communication network were considered in
(\cite{a001}, \cite{a003}, \cite{a002},  
\cite{a004}, \cite{a005}). 
Algorithms which minimize mean delay were designed in \cite{tara}.

  \cite{goyal}  considered the problem of
minimizing average delay under
  average power constraint, proved existence of an optimal policy
and obtained structural results for the optimal policy. 
Near-optimal closed-form solution is obtained in \cite{satya} that minimizes the average queue 
length under average power constraint when the rate is a linear function of power.
In \cite{satya12} considered the problem of minimizing the average power under average queue constraint,
and show the existence of an optimal policy.
\cite{nitin} implemented an online algorithm 
by modifying the value iteration equation that minimizes the average
power under an average delay constraint for a single user.
Existence of stationary optimal polices for the constrained average-cost
Markov decision processes
 was shown by setting up the problem as a constrained Markov decision process  (C-MDP) in 
(\cite{b002}, \cite{b003}, \cite{b001}).
Using techniques in Markov decision process  (MDP), structural properties of the optimal policy were obtained in 
\cite{c001}.

\cite{nel} proposed a new innovative algorithm which optimizes power while satisfying an upper
bound on the sum of the average queue lengths of multiple users by dropping packets intelligently. 
\cite{wang} proposed a  suboptimal policy which minimizes the average power under 
average queue constraint when there is an upper bound on packet loss also.
\cite{vineet} obtained asymptotic lower bounds for average queue length and average power consumption.
\cite{starneely} presented an algorithm for a
multiuser and multi channel scenario subject to an upper bound 
on the sum of the average queue length.
To meet the constraints the algorithm needs to learn the system parameters.

 {Energy efficient schemes are proposed when there is a 
hard deadline constraint over a wireless fading channel
in (\cite{wchen}, \cite{f004}, \cite{f003}, \cite{f002}, \cite{f001}). 
\cite{modi} presented a policy which minimizes the energy for sending a 
fixed number of packets
under given hard deadline  delay constraints. 
These  works studied the scenario when there is a energy harvesting system.
 \cite{f005} obtained a closed form optimal average power  solution when the hard deadline
is two and proposed a sub-optimal solution when the hard deadline constraint is more than 
two. \cite {wchen}  proposed an energy efficient scheduler with individual packet hard delay constraints.
In \cite{satyakanpur} obtains optimal solution that minimizes the average power under hard deadline 
constraint, when the rate is a linear function of power.

{Multihop QoS  problem can be solved in either a distributed or a centralized manner.  
Work on joint 
routing, scheduling and power control was 
provided in \cite{Cao_etal} which maximizes a utility function under average power constraint.
As this problem is intractable they provided a heuristic sub-optimal algorithm.
  \cite{vs_mh} considered the problem of
ensuring a fair utilization of network resources by jointly optimizing power 
control, routing, and scheduling  and obtained an efficient sub-optimal solution. 
 \cite{harish} extended the solution in
\cite{Cao_etal} to a multihop network where 
different nodes have multiple antennas and presented an efficient algorithm
for providing max-min fairness.}

\cite{hungneely} uses quadratic
Lyapunov functions to provide novel back-pressure algorithms. 
In \cite{neelyfrnds}, using the above approach upper bounds on average delay are presented. 
These back-pressure algorithms provide stability of the network if the load is within the capacity region.
But under high load, the end-to-end delay will be large and may violate any mean delay constraints.

A distributed scheme for joint power control, 
 scheduling and routing is proposed in  (\cite{g004}, \cite{g005},  \cite{g003}) for wireless networks that
guarantees the attainment of a certain fraction of the capacity region under
the signal to interference ratio (SINR) model. 
 Gossip algorithms are surveyed  in \cite{g007}. 
{In \cite{ashok}, authors  proposed a distributed algorithm  which uses a
 randomized approach to make provision for QoSs such as end-to-end mean delay/hard deadline
 delay.  \cite{prkumar} proposes a distributed scheme,  based on
MDP, to ensure end-to-end hard deadline constraints.

{Our contribution}:
We consider a multihop wireless network where the links experience fading.
 We have  obtained closed-form solutions, 
that minimize the average power 
when there is a hard deadline  constraint for a single user, single hop scenario.
The problem is formulated as a dynamic program (DP). By observing the solution structure of the DP, 
we  obtain an elegant closed-form expression for the optimal policy.
Later on, we  extend our single user, single hop hard deadline  results
to the  single/multiuser, multihop scenario, when the rate is a logarithmic function of power
and obtain computationally efficient algorithms when every user has its own end-to-end
hard deadline constraint. We obtain efficient routing, scheduling and power control
to provide end-to-end hard deadline for the users.

Our model is close to that of \cite{f005}. But \cite{f005} only considers a single user transmitting
over one hop. They obtain optimal policy for a deadline of 2 slots only and obtain heuristics for higher 
deadline. In \cite{prkumar} multiuser and multihop hard deadline is considered. But there is no fading on the 
transmission links, no power control and the arrival processes are deterministic. In our case we 
consider channels with fading, arrival processes are random and optiomal power scheme is obtained. In 
\cite{ashok} the above mentioned limitations of \cite{prkumar} are not there but it provides random
routing and does not optimize power. 

The paper is organized as follows. In Section \ref{s4_sm}
we describe the system model. 
In Section \ref{concave-case1} we consider the system when
data packets arrive to the queue periodically after $M$ slot intervals.  
 Section \ref{concave-case2} considers the case when the data packets arrive in every slot of the frame
and should be served by the end of the frame. 
In Section \ref{concave-case4} we extend the results of single user, single hop to single user, multihop
scenario.
In Section \ref{concave-case5} we generalize the results to the case when 
multiple, real time streams arrive, each with its own hard delay 
constraint.
Finally, Section \ref{s4_conclusions} concludes the chapter.
\section{System Model} \label{s4_sm}
Initially we consider a single user, single hop system. This will be later on
generalized to a multiuser, multihop system.

We consider a discrete time queue, where time is divided 
into slots of  duration one unit. Let $A_k$ nats arrive in 
the queue at the beginning of slot $k$ and they are stored in an 
infinite buffer. These should be transmitted within  next $M$ slots
(i.e., in time $[k,(k+M)))$, where $M<\infty$ is a 
positive integer (see Sections III and IV
for  arrival processes considered there).  
In a practical system, this corresponding to the case that multiple 
number of packets arrive and at the time of transmission these can be fragmented arbitrarly.
In a wireless system, this is a common practice. 
  We assume that the channel gain $H_k$ is
constant over the duration of slot $k$ and take value in a finite set. 
If the channel gains take continuous values, e.g., Rayleigh distributed,
these can be approximated well by quantization by taking $L$ large enough. 
The channel gain $H_k$ is known to the transmitter and receiver at 
time $k$. We assume $\{ H_k,\ k \geq 0\}$ is independent, identically distributed (\emph{iid}).
Let $R_k$ be the number
of nats transmitted in slot $k$.

Let $q_k$ denote the number of nats in the buffer at time $k$. 
Then, the queue evolves as, 
\begin{equation}
q_{k+1}= (q_{k}+ A_{k}- R_{k})^{+},\ k \geq 0, \label{basic2}
\end{equation}
where ${(x)}^+ = \max (0,x)$.
In slot $k$, the power required $P_k$ to transmit $R_k$ (nats) is given by 
 Shannon formula,
\begin{equation}
 R_k= \ln(1+ \frac{ \gamma P_k H_k}{\sigma^2}) , \label{first}
\end{equation}
where $\ln$ denotes $\log$ with base $e$,  $\sigma^2$ is the noise variance and $\gamma$ depends on modulation and coding used. 
Our objective is to minimize,
\begin{equation}
 \limsup_{n \rightarrow \infty} \frac{1}{n} \sum_{k=1}^n E[P_k], \label{main}
\end{equation}
when there is an individual delay constraint on each packet, i.e., $A_k$ should 
be transmitted by time $(k+M), \forall k$.

In Section III we obtain the optimal policies where the time axis is divided
into  frames of size $M$ time units (first frame is $[1, M]$), where $M$ is the deadline of each packet. 
The arrivals come in the beginning of a frame. In Section IV we allow the packets to arrive in every slot but 
the packets arriving in time $[1, M]$, need to be transmitted by time $M$. The policies obtained in Sections III and IV
will be used in later sections to deveop routing, scheduling and power control policies in multihop wireless
networks providing end-to-end hard deadlines.
\section{Arrivals in Beginning of Frame} \label{concave-case1}
  For simplicity, in Shannon formula (\ref{first}), we assume $\sigma^2=1$, and $\gamma=1$.
  Our optimal policies obtained below can be generalized easily.
 We assume that $A_k$ nats arrive at time $kM+1$, $k=0,1,2,...$ and need to be transmitted by time $(k+1)M$.
 No other arrivals come in the mean time. We assume $\{A_{kM}, k \geq 0\}$ is an $iid$ sequence. 
 We provide an algorithm for this setup. It uses Dynamic Programming \cite{putterman}. The intervals $[kM+1, (k+1)M]$
 will be called frames of size $M$.
  The following theorem will be used to obtain the optimal
 algorithm below.
 
 \begin{theorem}
    {    The optimal average power consumption }
 \begin{eqnarray} \label{convexhardstar}
 E_H[W_M(A_1,H)]  =  M e^{\frac{A_1}{M}} \left( \displaystyle{\prod_{j=1}^M} E \left[ \frac{1}{H^{\frac{1}{j}}} \right]^{\frac{j}{M}} \right) \notag \\
 -M E\left[ \frac{1}{H} \right], 
 \end{eqnarray}
for $M\geq 1$.

 \end{theorem}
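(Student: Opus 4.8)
The plan is to set this up as a backward dynamic program over the $M$ slots of a frame and prove the formula by induction on the number of remaining slots. Inverting the Shannon relation (\ref{first}) (with $\sigma^2=\gamma=1$), the power needed to send $R$ nats when the channel gain is $h$ is $(e^R-1)/h$. Writing $W_j(q,h)$ for the minimum expected power to clear $q$ remaining nats in the last $j$ slots given current gain $h$, and $V_j(q):=E_H[W_j(q,H)]$, the Bellman recursion reads
\begin{equation}
W_j(q,h)=\min_{0\le R\le q}\left[\frac{e^R-1}{h}+V_{j-1}(q-R)\right],
\end{equation}
with terminal cost $W_1(q,h)=(e^q-1)/h$ (everything must go out in the last slot). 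Since $A_1$ is the only arrival in the frame and power is increasing in the transmitted rate, an optimal policy sends exactly $A_1$ nats, so the quantity in the theorem is precisely $V_M(A_1)$.

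The key step is the induction hypothesis that $V_j$ retains an exponential-minus-constant form,
\begin{equation}
V_j(q)=a_j\,e^{q/j}-b_j,\qquad a_j=j\prod_{i=1}^{j}E\!\left[H^{-1/i}\right]^{i/j},\quad b_j=j\,E[1/H].
\end{equation}
The base case $j=1$ is immediate from the terminal cost. For the inductive step I would substitute $V_{j-1}(q-R)=a_{j-1}e^{(q-R)/(j-1)}-b_{j-1}$ into the Bellman equation; the objective is convex in $R$, so the first-order condition $e^{R}/h=\frac{a_{j-1}}{j-1}e^{(q-R)/(j-1)}$ determines the unique minimizer $R^{\ast}=\frac{q}{j}+\frac{j-1}{j}\ln\!\frac{h\,a_{j-1}}{j-1}$. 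Plugging $R^{\ast}$ back in, the two exponential terms both collapse to a common factor $h^{-1/j}e^{q/j}$, and after taking $E_H[\cdot]$ one reads off the recursions $b_j=E[1/H]+b_{j-1}$ and $a_j=j\,(j-1)^{-(j-1)/j}\,a_{j-1}^{(j-1)/j}\,E[H^{-1/j}]$.

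The main obstacle, and really the crux of the calculation, is verifying that the recursion for $a_j$ telescopes into the stated product. Raising the hypothesized $a_{j-1}=(j-1)\prod_{i=1}^{j-1}E[H^{-1/i}]^{i/(j-1)}$ to the power $(j-1)/j$ produces a factor $(j-1)^{(j-1)/j}$ that cancels exactly against the $(j-1)^{-(j-1)/j}$ in the recursion, leaving $a_j=j\,E[H^{-1/j}]\prod_{i=1}^{j-1}E[H^{-1/i}]^{i/j}=j\prod_{i=1}^{j}E[H^{-1/i}]^{i/j}$, which closes the induction; the linear recursion likewise gives $b_j=jE[1/H]$. Setting $j=M$ and $q=A_1$ then yields the claimed expression. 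One subtlety I would flag is feasibility of the interior solution: the closed form tacitly assumes $R^{\ast}\in[0,q]$ in every slot (the deadline is binding and transmission is active throughout the frame), which holds in the regime of interest; handling boundary cases where some $R^{\ast}$ would be negative would require truncation and is where the clean formula could break down.
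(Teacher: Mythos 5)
Your proposal is correct and follows essentially the same route as the paper: backward induction on the frame length through the Bellman recursion, solving the first-order condition for the transmitted rate, and checking that the resulting recursion telescopes into the stated product. The only differences are presentational---you make the value-function form $a_j e^{q/j}-b_j$ and the recursions for $a_j,b_j$ explicit, and you flag the interior-solution feasibility issue ($R^{\ast}\in[0,q]$), which the paper's proof also leaves unaddressed.
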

  \begin{proof}
  {We define the set of all feasible policies 
 $ S_M({A_1})=\{ R=(R_1,...,R_{M}):  \sum_{k=1}^{M} R_k=A_1 ,R_k \geq 0, \forall k \} $. 
  The expected cost for choosing policy $R$ is}
 \begin{equation}
  \displaystyle{ W_M(A_1,R,H)=E_{H}\left[ \sum_{k=1}^{M} f(R_k,q_k,H_k)  \right ], }
 \end{equation}
 {where $f(R_k,q_k,H_k)$,  the power consumed in slot $k$ by transmitting
$R_k$ nats when $q_k$ is the queue length in slot $k$, is}
\begin{equation}
 \displaystyle{ f(R_k,q_k,H_k)= \left( \frac{ e^{R_k}-1}{H_k} \right) \text{ for } {\{R_k \leq q_k\}}, \forall k}.
\end{equation}
 {Our aim is to find a policy $R^* \in S_M(A_1)$ for 
each $M \in \mathbb{N}$ for which }
\begin{equation}
 W_M(A_1,H) \triangleq \inf_{R \in S_{M}(A_1)} W_M(A_1,R,H).
\end{equation}

 {We define for any $1 \leq j \leq M$, the average power spent
from decision 
time $j$ onwards as}
\begin{equation}
 \displaystyle{ W_{M-j}(q_j,h_j)=E_{H} \bigg[ \sum_{l=j}^M f(R_l,q_l,H_l) \bigg], \text {where } q_j =\sum_{l=j}^M R_l.}
\end{equation}
 {From Bellman's equation (\cite{lassi}, \cite{putterman}), we have $W_{M-j}(q_j,h_j)=$}
\begin{equation}\label{nine}
 \min_{R^{*}_j \in [0, q_j]} \left \{  f(R_j,q_j,h_j)+ E_{H_{j+1}}[W_{(M-j-1)}(q_{j+1},H_{j+1})]   \right \}.
\end{equation}
 
 { We obtain a closed-form solution using induction.
 Initially, we assume that the hard deadline constraint is one slot.
 Then, we need to transmit $A_1$ in the first slot itself.  Let $h_1$ be the channel gain
 in slot $1$. Then power consumption in the  first slot is $W_1(A_1,h_1)=\frac{e^{A_1}-1}{h_1}$. 
 Hence the average power consumption is}
 \begin{equation}
 E_{H}[W_{1}(A_1,H)]=   (e^{A_1}-1) E \left [ \frac{1}{H} \right ].          
\end{equation}
 Thus, (\ref{convexhardstar}) is satisfied for $M=1$.

Let for $M\geq 1$,
\begin{equation}
 E_H[W_M(A_1,H)]= M e^{\frac{A_1}{M}} \left( \displaystyle{\prod_{j=1}^M} E \left[ \frac{1}{H^{\frac{1}{j}}} \right]^{\frac{j}{M}} \right)-M E \left[ \frac{1}{H} \right].
\end{equation}

Now,we want to show that (\ref{convexhardstar}) also holds for $M+1$. From (\ref{nine}),
 \begin{align}
 W_{M+1}(A_1,h_1)= & \min_{R_1} \left \{ \frac{e^{R_1}-1}{h_1}+ E_{H}[W_M(A_1-R_1,H)]   \right \} \notag \\
=&\min_{0 \leq R_1 \leq A_1} \Biggl \{ \frac{e^{R_1}-1}{h_1}+ M e^{\frac{A_1-R_1}{M}}\times   \notag \\
&  \bigg( \displaystyle{\prod_{j=1}^M} E \left[ \frac{1}{H^{\frac{1}{j}}} \right]^{\frac{j}{M}} \bigg) -M E \left[ \frac{1}{H} \right] \Biggr \}. \label{indstar} 
\end{align}
By taking derivate w.r.t  $R_1$ and equating to zero. We get,
\begin{equation} \label{mstar}
 e^{R_1}= (h_1)^{\frac{M}{M+1}} e^{\frac{A_1}{M+1}} \left(\prod_{j=1}^M \left( E \left[ \frac{1}{H^{\frac{1}{j}}} \right] \right)^{\frac{j}{M+1}} \right).
\end{equation}

Substituting (\ref{mstar}) in (\ref{indstar}).  
\begin{eqnarray} \label{fifteen}
 W_{M+1}(A_1,h_1)&= &\frac{(M+1) e^{\frac{A_1}{M+1}} \left( \displaystyle{\prod_{j=1}^M} E \left[ \frac{1}{H^{\frac{1}{j}}} \right]^{\frac{j}{M+1}} \right)}{(h_1)^{\frac{1}{M+1}}} \notag \\
 &-&M E \left[ \frac{1}{H} \right] -\frac{1}{h_1}.
\end{eqnarray}
By taking expectation  of the above equation on both side. We get 
\begin{eqnarray*} \label{star}
 E_H[W_{M+1}(A_1,H)]= (M+1) e^{\frac{A_1}{M+1}} \left( \displaystyle{\prod_{j=1}^{M+1}} E \left[ \frac{1}{H^{\frac{1}{j}}} \right]^{\frac{j}{M+1}} \right)\notag \\
 -(M+1) E \left[ \frac{1}{H} \right]. 
\end{eqnarray*}

\end{proof}

{From Theorem 1,  if $A_1$ data is to be transmitted in a frame of size $M$, then in slot $k$, $1\leq k \leq M$, the data transmitted $R_k$
with remaining deadline of $M-k$ is
\begin{equation}
 \displaystyle{ R_{k}= \ln \left( (h_k)^{\frac{M-k-1}{M-k}} e^{\frac{A_1}{M-k}} \left(\prod_{j=1}^{M-k-1} \left( E \left[ \frac{1}{H^{\frac{1}{j}}} \right] \right)^{\frac{j}{M-k}} \right)\right)}.
\end{equation}}


 All our derivations
hold good for continuous channel case as well. We should replace the summations 
over the channel gains with the integrations over the channel distributions.

{From (\ref{convexhardstar}), we see that $W_M(A_1,h_1)$ is exponentially increasing in $A_1$. 
Also, since the policies for $M$ are a subset of policies for $M+1$, 
$E_H[W_{M+1}(A_1,H)] < E_H[W_{M}(A_1,H)]$. We can indeed prove that this inequality can be made strict.

 For the non-fading case, i.e., $H_k=1$. Hence, from (\ref{star})
 we get
  \begin{equation}
  R_k= \displaystyle{\frac{A_1}{M},\ \forall k,\ 1 \leq k \leq M.}
 \end{equation} }

We summarize this algorithm as  Algorithm 1 below.
We  initialize the
hard deadline constraint as $D_1=M$, $q_1=0$ and  place the data in the queue and queue evolves as
$q_k=q_k+A_k1_{\{(k-1) \mod M=0\}} $. It should be served within next
$M$ slots including the current slot of the arrival.
In slot $k$, we transmit
$R_{k}= \ln \left( (h_k)^{\frac{D_k-1}{D_k}} e^{\frac{q_k}{D_k}} \left(\prod_{j=1}^{D_k-1} \left( E \left[ \frac{1}{H^{\frac{1}{j}}} \right] \right)^{\frac{j}{D_k}} \right)\right)$
 and we also update the queue length as $q_{k+1} = \left( q_k-R_k \right)$. We update the remaining 
 hard deadline as $D_{k+1}=D_k-1+ M 1_{\{(k-1) \mod M=0\}}$.  We run this algorithm in 
 every slot, where $T$ is the total number of frames.
 
  \begin{algorithm}
  \begin{algorithmic}
  \caption{  {Average Power Optimal  for hard deadline,
  when the rate is concave function of  power}}
  \label{algo1}
   \STATE 1. Initialize $D_1=M $ and $q_1=0$
   \STATE 2. for $k=1:1:TM$ \\
    \STATE (i)  $q_k=q_k+A_k1_{\{(k-1) \mod M=0\}} $
    \STATE (ii) Let $h_k$ be the instantaneous  channel gain in time slot $k$. 
    \STATE (iii) Compute \\
   $ { R_{k}= \ln \left( (h_k)^{\frac{D_k-1}{D_k}} e^{\frac{q_k}{D_k}} \left(\prod_{j=1}^{D_k-1} \left( E \left[ \frac{1}{H^{\frac{1}{j}}} \right] \right)^{\frac{j}{D_k}} \right)\right)}$\\
     \STATE (iv)  Update $q_{k+1}=\left(q_k-R_k \right)$ 
      \STATE (v) Update $D_{k+1}=D_k-1+ M 1_{\{(k-1) \mod M=0\}}$
    \STATE endfor   
  \end{algorithmic}
 \end{algorithm}

{Now, we compare our optimal policy with the heuristic policies proposed in \cite{f005}.  
{ $A_k$
takes values from the set $\{ 0.5, 1, 1.5\}$ with equal probabilities.
Channel gains take values in the set $\{ 0.25$, $0.37$, $0.5$, $0.62\}$ with equal probabilities. 
Data comes only at  the
beginning of the frame  and should be served by the end of the frame. The duration 
of the frame is $M$ slots. 
We plot the optimal average power and the power consumed by the heuristic schemes in (\cite{wchen}, \cite{f005})
in  Fig. \ref{xyz44477788}. It is clear that the  average power consumption decreases with
the hard deadline $M$ and our scheme outperforms the other schemes in  (\cite{wchen}, \cite{f005}).

 \begin{figure}
\begin{flushleft}
\includegraphics[trim=3.9cm 9.7cm 4cm 10cm, clip=true, scale=0.70]{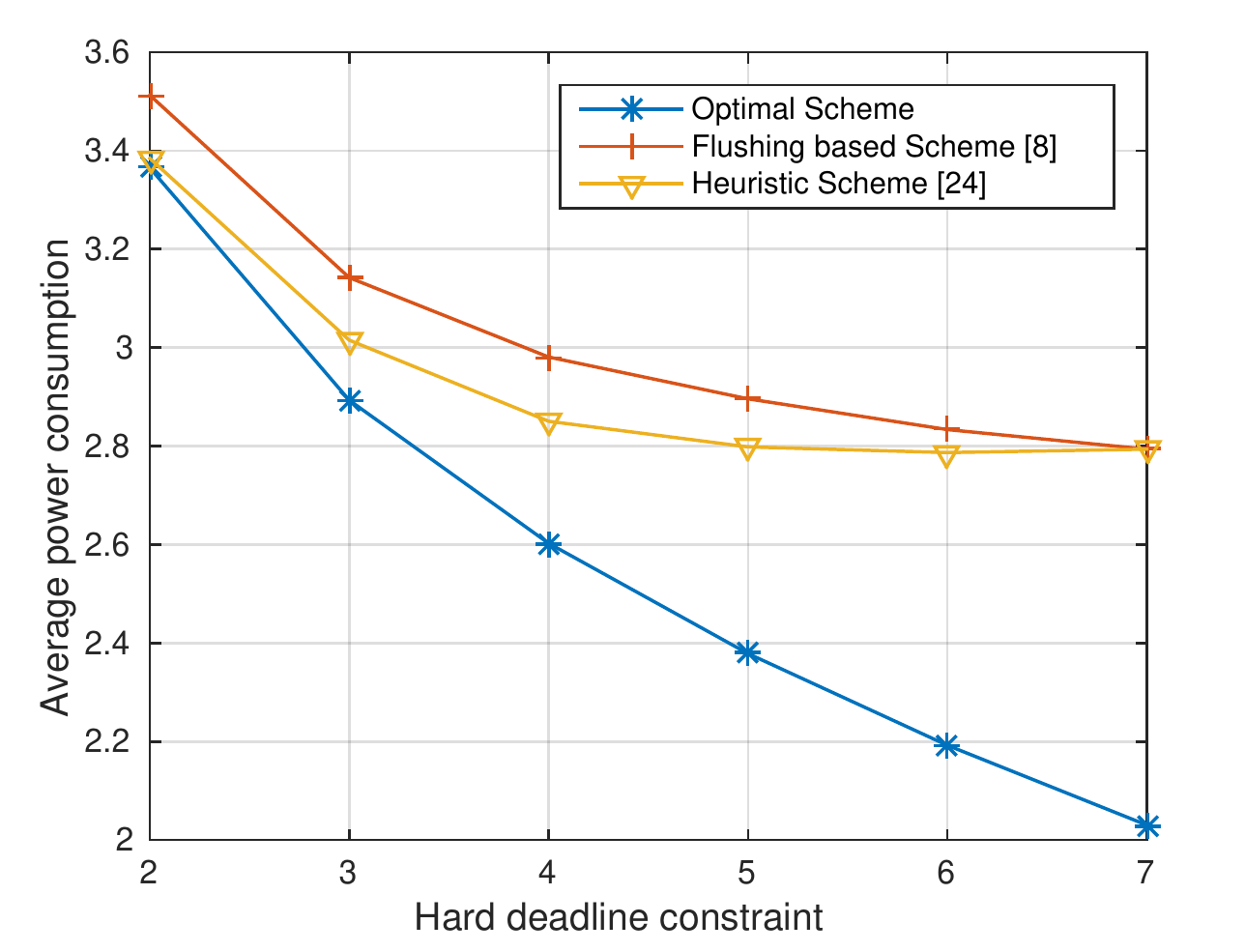} 
\caption{Average power consumption of optimal scheme and proposed heuristic schemes 
in \cite{f005} and \cite{wchen} vs hard deadline constraint $M$.}
\label{xyz44477788}
\end{flushleft}
\end{figure}



\section{Data Arrives  in Every Slot} \label{concave-case2}
 In this Section, we consider the case
 when the data arrives in  every slot of the frame and
 all the data which has arrived in the  frame
 should be served by the end of the frame. 
 We assume that $\{ A_k, k \geq 2\}$ is \emph{iid}.
 In the beginning of the frame we have data $\bar{A}_1$, independent of $\{ A_k, k \geq 2 \}$.
 Distribution of $\bar{A}_1$ can be different from that of $A_2$. 

The following theorem will provide us the algorithm. This is an extension of Theorem 1.
\begin{theorem}
Let the  data arrive in every slot of the frame and the 
  data arriving  between $[kM+1 \ (k+1)M]$, should be served by
$(k+1)M$. Also, let $\bar{A}_1$ nats be there in the beginning of slot $1$.
Then, the average power consumption for the optimal solution is, 
 \begin{eqnarray} \label{secondtheorem}
E_{{A_2},H}[W^{\prime}(\bar{A}_1,H)]= M e^{\frac{\bar{A}_1}{M}}  \left( \displaystyle{\prod_{j=1}^M} E \left[ \frac{1}{H^{\frac{1}{j}}} \right]^{\frac{j}{M}} \right)\notag \\
 \left( \prod_{j=1}^{M-1} E_{A_2} \left[  e^{\frac{A_2}{j}}\right]^{\frac{j}{M}} \right)-M E\left[ \frac{1}{H} \right],   
\end{eqnarray}
for all $M \geq 1$.
\end{theorem}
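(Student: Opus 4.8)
The plan is to extend the induction from Theorem~1 to the setting where a fresh arrival enters in every slot. For a frame with $m$ slots still to go, let $W'_m(q,h)$ be the optimal expected power-to-go given that the post-arrival queue is $q$ and the current channel gain is $h$, the expectation being over all remaining channel gains and all remaining arrivals; let $A$ denote a generic arrival, distributed as $A_2$. Bellman's equation then reads
\begin{equation*}
 W'_m(q,h) = \min_{0 \le R \le q} \left\{ \frac{e^R-1}{h} + E_{A,H}\left[ W'_{m-1}(q-R+A,H) \right] \right\},
\end{equation*}
with terminal cost $W'_1(q,h) = (e^q-1)/h$, since all data present must be cleared in the final slot.

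The core of the argument is to carry, as in (\ref{fifteen}), a \emph{per-realization} inductive hypothesis
\begin{equation*}
 W'_m(q,h) = m\, e^{\frac{q}{m}} h^{-\frac{1}{m}} \left( \prod_{j=1}^{m-1} E\left[ \frac{1}{H^{\frac{1}{j}}} \right]^{\frac{j}{m}} \right) \left( \prod_{j=1}^{m-1} E_{A_2}\left[ e^{\frac{A_2}{j}} \right]^{\frac{j}{m}} \right) - (m-1) E\left[ \frac{1}{H} \right] - \frac{1}{h},
\end{equation*}
which is exactly (\ref{fifteen}) decorated by the arrival product. First I would check the base case $m=1$, where both products are empty and the right-hand side collapses to $(e^q-1)/h$. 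Next I would observe that taking $E_H$ of the hypothesis, absorbing the new factor $E[H^{-1/m}]$ as the $j=m$ channel term and the stray $-E[1/H]$ into the $-(m-1)E[1/H]$ summand, reproduces (\ref{secondtheorem}) with $m=M$ and $q=\bar{A}_1$; hence it suffices to establish the hypothesis for all $m$.

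For the inductive step I would substitute the hypothesis for $W'_{m-1}$ into Bellman's equation and evaluate $E_{A,H}$ term by term. The step I expect to need the most care is the bookkeeping of the arrival factors: the factor $E_{A_2}[e^{A_2/(m-1)}]$ freshly produced by integrating $e^{(q-R+A)/(m-1)}$ over $A$ combines with the inherited product $\prod_{j=1}^{m-2} E_{A_2}[e^{A_2/j}]^{j/(m-1)}$ to give precisely $\prod_{j=1}^{m-1} E_{A_2}[e^{A_2/j}]^{j/(m-1)}$, in perfect analogy with the way $E[H^{-1/(m-1)}]$ closed up the channel product in Theorem~1. Writing $C$ for the resulting combined channel-and-arrival constant at level $m-1$, the bracket reduces to $\frac{e^R-1}{h} + (m-1) e^{(q-R)/(m-1)} C - (m-1) E[1/H]$, whose form matches (\ref{indstar}). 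Setting the $R$-derivative to zero gives the interior minimizer $e^R = h^{\frac{m-1}{m}} e^{\frac{q}{m}} C^{\frac{m-1}{m}}$; substituting it back and using $C^{\frac{m-1}{m}} = \big( \prod_{j=1}^{m-1} E[H^{-1/j}]^{j/m} \big)\big( \prod_{j=1}^{m-1} E_{A_2}[e^{A_2/j}]^{j/m} \big)$ collapses the two minimized terms into $m\, e^{q/m} h^{-1/m} C^{(m-1)/m}$ and reproduces the hypothesis at level $m$. Taking $E_H$ of the level-$M$ expression then yields (\ref{secondtheorem}), closing the induction.
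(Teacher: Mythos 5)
Your proposal is correct and follows essentially the same route as the paper: induction on the frame length via Bellman's equation, a first-order condition for the optimal $R$, and substitution back followed by an expectation over $H$. The only cosmetic difference is that you carry the per-realization expression (the analogue of (\ref{fifteen})) as the inductive hypothesis and average over $H$ at the end, whereas the paper states the hypothesis directly in expected form; the computations coincide.
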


\begin{proof}
When the frame size is one, we have 
 \begin{equation}
 E_{H}[W^{\prime}_{1}(\bar{A}_1,H)]=   (e^{\bar{A}_1}-1) E \left [ \frac{1}{H} \right ].          
\end{equation}
Thus, (\ref{secondtheorem}) is satisfied. Let it be satisfied for $M \geq 1$. We show it for $M+1$. By (\ref{nine}), 
\begin{eqnarray} \label{mplusoneterm2}
 W^{\prime}_{M+1}(\bar{A}_1,h_1)&=& \min_{ 0 \leq R_1 \leq \bar{A}_1} \biggl\{ \frac{e^{R_1}-1}{h_1}  \\
 &+& E_{H}E_{A_2}[[ W^{\prime}_M(\bar{A}_1-R_1+A_2,H)]] \biggr \}. \notag
 \end{eqnarray}
By taking derivate w.r.t.  $R_1$ and equating to zero. We get 
\begin{eqnarray}\label{mplusonerateprime}
 e^{R_1}= (h_1)^{\frac{M}{M+1}} e^{\frac{\bar{A}_1}{M+1}}  \left( \prod_{j=1}^M  E\left[ \frac{1}{H^{\frac{1}{j}}} \right]^{\frac{j}{M+1}} \right) \times \\
 \left(\prod_{j=1}^M E_{A_2}\left[e^{\frac{A_2}{j}} \right]^{\frac{j}{M+1}} \right). \notag
\end{eqnarray}
Substituting  (\ref{mplusonerateprime}) in (\ref{mplusoneterm2}) and taking expectation over $H$, we get

\begin{align} 
&E_{A_2,H}[W^{\prime}_{M+1}(\bar{A}_1,H)]=(M+1) e^{\frac{\bar{A}_1}{M+1}} \times  \notag \\
&\left( \displaystyle{\prod_{j=1}^{M+1}} E \left[ \frac{1}{H^{\frac{1}{j}}} \right]^{\frac{j}{M+1}} \right) \left( \prod_{j=1}^{M} E_{A_2} \left[  e^{\frac{A_2}{j}}\right]^{\frac{j}{M+1}} \right) \notag \\
&-(M+1) E \left[ \frac{1}{H} \right].
\end{align}
\end{proof}

This gives the following Algorithm 2. 
We  initialize the hard deadline constraint as $D_1=M$, $q_1=0$  and set $q_k=q_k+A_k$, where $q_1=0$. 
In time slot $k$, we transmit 
\begin{eqnarray*}
 R_k= \ln \Biggl( e^{\frac{q_k}{D_k}} (h_k)^{\frac{D_k-1}{D_k}}  \bigg(\prod_{j=1}^{D_k-1} E \left[ \frac{1}{H^{\frac{1}{j}}} \right]^{\frac{j}{D_k}} \bigg) \times \notag \\
 \bigg( \prod_{j=1}^{D_k-1} E_{A_2} \left [ e^{\frac{A_2}{j}} \right]^{\frac{j}{D_k}} \bigg)\Biggr)
\end{eqnarray*}
 and we update the queue length as $q_{k+1} = (q_{k}- R_k)$ and $D_{k+1}=D_k-1$. 
  We run this algorithm  in  every frame.

 \begin{algorithm}
  \begin{algorithmic}
  \caption{  Data comes in whole frame}
  \label{algo1}  
   \STATE 1. Initialize $D_k=M $, $q_{1}=0$
   \STATE 2. for $k=1:1:TM$ \\
    \STATE (i) $q_{k}=q_k+A_k $\\
    (ii) Let $h_k$ is instantaneous channel gain in time slot $k$.  \\ \nonumber
    (iii)  Compute $R_k=$\\
       $\ln \left( e^{\frac{q_k}{D_k}} h_k^{\frac{D_k-1}{D_k}}  \Pi_{j=1}^{D_k-1} E \left[ \frac{1}{H^{\frac{1}{j}}} \right]^{\frac{j}{D_k}}    \left( \Pi_{j=1}^{D_k-1} E \left [ e^{\frac{A}{j}} \right]^{\frac{j}{D_k}} \right)\right)$\\
    (iv)   Update $q_{k+1}=(q_k-R_k)$ \\
    (v) Update $D_{k+1}=(D_k-1)+ M 1_{\{(k-1) \mod M=0\}}$\\
    \STATE endfor
  \end{algorithmic}
 \end{algorithm}


 \begin{figure}
\begin{flushleft}
\includegraphics[trim=3.9cm 9.7cm 4cm 10cm, clip=true, scale=0.70]{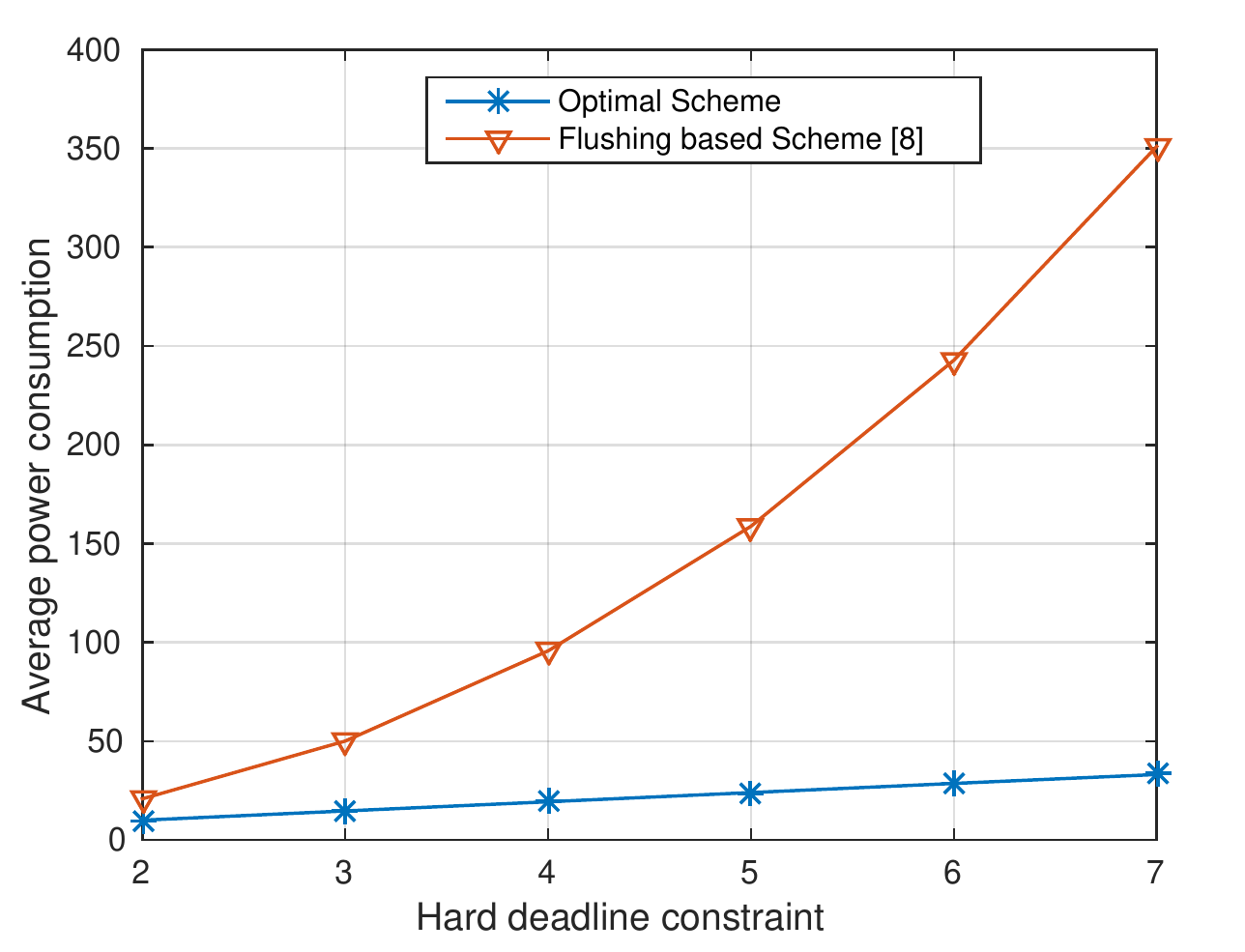}
\caption{Average power consumption vs hard deadline constraint.}
\label{xyz455777711}
\end{flushleft}
\end{figure}

From the expression (\ref{secondtheorem}) for $W^{\prime}_M(\bar{A}_1,h_1)$ we observe that the power 
consumed exponentially increases with $\bar{A}_1$.
But unlike, the results in Section III, $E_{A_2,H}[W^{\prime}_M(\bar{A}_1,H)]$
increases with $M$ if $\bar{A}_1$ and $A_2$ have the same distribution. This is
because there is traffic arriving during the frame itself. To see this,
let for the case of frame size, $M+1$, in slot $1$, we use power $P_1$
and transmit $R_1$ nats, then 
\begin{eqnarray*}
 &E_{{A}_2,H}[W^{\prime}_{M+1}(\bar{A}_1,H)]=P_1+ \notag \\
 &E_{{A}_2,H_1}[W_{M}^{\prime}(\bar{A}_1+A_2-R_1,H) ] 
 \geq E_{{A}_2,H}[W^{\prime}_M(\bar{A}_1,H)].
\end{eqnarray*}
If $\bar{A}_1$ and $A_2$ have different distributions then this inequality may not hold
because $\bar{A}_1+A_2-R_1$ may not be stochastically greater than $\bar{A}_1$.

Now, we compare our optimal policy with the algorithm proposed in \cite{wchen}.
 Arrival process
takes values from the set $\{ 2, 2.5, 3\}$ with equal probabilities. 
Channel gains take values in the set $\{ 0.5, 0.75, 1, 1.25\}$ 
with equal probabilities. We have run the algorithm and plotted the simulated curves for average 
 power consumptions versus the deadline constraint M in Fig. \ref{xyz455777711}. {Our 
 algorithms always outperforms the algorithm in \cite{wchen} and often substantially.}
\section{Single User, Multihop Network} \label{concave-case4}
We consider a network which is a connected, directed
graph $\mathcal{G(N,L)}$, where 
$\mathcal{N}$ 
is the set of nodes
and $\mathcal{L}$ 
is
the set of directed links. 
A subset of nodes (called source nodes) in the network
transmits data to another subset of nodes (called
destination nodes).
Each source has one destination. The time axis is slotted.
 The stream of packets transmitted from a source node to its
 respective destination node is called a \emph{flow}.
 
The set of user flows $\{1,2,\ldots,N\}$ is denoted by $\mathcal{F}$. 
Let $A^f_k$ be the number of nats generated by flow 
$f$ in slot $t$ at its source.
We assume $\{A^f_k,\ k\geq 0\}$ to be \emph{iid},
independent for different flows. We will also assume
$E[A^f]<\infty$ for each $f$. End-to-end hard deadline 
for flow $f$ is denoted as $D^f$.

We assume that  when a node is transmitting to some other node, 
it cannot receive data from any other
node(s). Similarly, when a node is receiving data from a  node, it
cannot transmit data to any other
node. Other transmission constraints can be included in our setup (also, we can modify 
the setup to we allow fewer constraints, e.g., we allow full duplex links). 

Because of these constraints we can divide the set of links $\mathcal{L}$ into
independent sets. All links in a set can transmit at the same time, causing
negligible interference to each other. However, the links in two different
sets cannot transmit in the same slot. Efficient algorithms to obtain independent
sets in a graph are available in \cite{robo}.

Let $(i,j)$ be the link which connects node $i$ to node $j$.
 Let the channel gain in slot $k$
for link $(i,j)$ be $H_{ij}^k$, which is available 
to the nodes $i$ and  $j$ at the beginning of slot $k$.
We assume that the channel gain $H_{ij}^k$ remains constant during  slot $k$.  
We also assume that the channel gain  process $\{H_{ij}^k,\ k \geq 0\}$ 
is \emph{iid} on all links
and independent for different links.
The channel gain $H_{ij}^k$ takes values on a finite set. 
If in time slot $k$, the power spent by node $i$ for flow $f$
is $P_{ij}^k(f)$ then the data $R_{ij}^k(f)$ transmitted to node $j$ for flow
$f$ is,
\begin{equation}
 R_{ij}^k(f)=\frac{1}{2} \ln(1+ \gamma_{ij} P_{ij}^k(f) H_{ij}^k /\sigma^2_{ij}), \label{first8}
\end{equation}
 where $\sigma^2_{ij}$ is the receiver noise variance and $\gamma_{ij}$ is a constant that depends 
 on the modulation and coding used. For simplicity, we assume that $\sigma^2_{ij}=1$, 
and $ \gamma_{ij}=1, \forall i,j$.

In this section, we consider the problem of a single user. 
 For simplicity, from now onwards, in this section we will remove $f$ from the notation.
Our objective is to minimize the overall average power consumption of the 
wireless network subject to the end-to-end deadline for all packets.
We will use Algorithms 1 and 2, 
to solve the single user problem. 

Suppose for the flow, the overall end-to-end hard deadline is $D$ slots.
If we selected a path $\mathcal{P}$ for this flow with links $l_1, l_2, \cdots, l_{N_1}$, 
we can split the hard deadline $D$ into deadlines $D_i, i=1,\cdots, N_1$ for each of the
links such that $D_i\geq 1$ and $\sum_{i=1}^{N_1}D_i \le D$. We should choose $D_i$'s such that
the overall power required $\sum_{i=1}^N P_i$ on these links is minimized where $P_i$ is 
the power spent on link $l_i$. The power $P_i$ required on these links is obtained from 
Theorem 2 (for the source node 1) and Theorem 1 for the other nodes.

The link scheduling is done via TDMA on the independent sets.
If there are $N_2$ independent sets in the graph, we retain only the independent 
sets in which at least one of the links on $\mathcal{P}$ resides (for multiuser scenario,
we may have to consider all independent sets). For simplicity, we assume that link $i$ is 
in set $S_{i}$. We assign $D_i$ consecutive slots to set $S_i$. 
Also we assume that these links get their slots one after another on $\mathcal{P}$. 
Furthermore, we also assume that a link is in only one of the independent sets.
Our procedure can be adapted to the general case. However a set $S_i$ will often have multiple links. 
Thus, we will consider the case where we take a route $\mathcal{P}$ where link $i$ has deadline
$D_i$ and we assign $D_i$ consecutive slots to the set $S_{i}$. We will not 
insist that $\sum_{j=1}^{N_i}D_j = D$. We will in fact see that often $\sum_{j=1}^{N_i}D_j < D$ will provide
less power, something which seems counterintuitive. Let there be $N_3$ independent sets needed for the links $l_1,l_2,\cdots, l_{N_1}$.

In this setup, node 1 will transmit for $D_1$ consecutive slots 
and then wait for $D2+D_3+\cdots+D_{N_3}$ slots for other independent
sets to transmit and then again transmit for $D_1$ slots and so on.
During these $D_1$ slots, it will get $A_1,\cdots, A_{D_1}$ \emph{iid} nats 
for transmission. Also it will have $\sum_{k=1}^{D_2+\cdots+D_{N_3}}A_k$ nats 
in the beginning of each of its frames (of size $D_1$ slots)
generated  by the source when it is not transmitting, 
for transmission. 
Thus by Theorem 2, the energy needed by node 1 to transmit all this data in a frame of size $D_1$
is  given 
\begin{align}
&= E_{A_1,H}\left[W^{\prime}_{D_1}\left(\sum_{k=1}^{D_2+\cdots+D_{N_3}}A_k, H\right)\right] \label{eq1}
\end{align}

\begin{align*}
&=D_1 E[e^{\frac{A_1}{D_1}}]^{D_2+\cdots+D_{N_3}}\biggl(a_{D_1}\biggr) \left(\prod_{j=1}^{D_1-1}E[e^{\frac{A_1}{j}}]^{\frac{j}{D_1}} \right) \notag \\
&-D_1E\left[\frac{1}{H_1}\right],
\end{align*}
where $a_{D_i} = \prod_{z=1}^{D_i} E\left[ \frac{1}{H_i^{\frac{1}{z}}} \right]^{\frac{z}{D_i}}. $

For node 2 on the path, all the data it will transmit in its frame comes at the beginning of its frame of size $D_2$ slots. The data arriving is $\sum_{k=1}^{D_1+\cdots+D_{N_3}}A_k$. Hence, by Theorem 1, the energy required to transmit it is 
\begin{align}
&E_{H}\left[W_{D_2}\left(\sum_{k=1}^{D_1+\cdots+D_{N_3}} A_k, H \right)\right] \notag \\
&= D_2E\left[e^{(\sum_{k=1}^{D_1+\cdots+D_{N_3}}A_k)/D_2}\right]\biggl(a_{D_2}\biggr) -D_2E\left[\frac{1}{H_2}\right] 
\label{eq2}
\end{align}
Similarly for node $i$ on the path $\mathcal{P}$, we have $D_i$ slots in its frame and data $\sum_{k=1}^{D_{N_3}}A_k$ arrives only in the beginning of its frame. Thus the energy used is 
\begin{align}
&E_H \left[W_{D_i}\left(\sum_{k=1}^{D_1+\cdots+D_{N_3}}A_k, H \right)\right] \notag \\
&= D_iE\left[e^{(\sum_{k=1}^{D_1+\cdots+D_{N_3}}A_k)/D_i}\right]\biggl(a_{D_i}\biggr) -D_iE\left[\frac{1}{H_i}\right] 
\label{eq3}
\end{align}

Now we look at the power (\ref{eq1}) and (\ref{eq3}) to see what
could possibly be good deadlines $D_i$. In (\ref{eq1}), we see as $D_1$
increases, keeping $D_2,\cdots,D_{N_3}$ same, $a_{D_1}$ decreases;
$-D_1E\left[\frac{1}{H_1}\right]$ will also reduce power. But in the
first term, some of the subterms increase. However, we will see that 
the dominant term is $E[e^{\frac{A_1}{D_1}}]^{D_2+\cdots+D_{N_3}}$. 
This exponentially decreases in $D_1$ and its effect is amplified by
exponent $D_2+\cdots+D_{N_3}$. Thus as $D_1$ increases while other 
$D_i$s stay constant, we expect that power spent at node 1 will decrease. 
On the other hand if any of the other $D_i$'s increase, its power requirement will exponentially increase.

Now we look at (\ref{eq3}) for node $i$. If $D_i$ increases but 
other $D_j$ are fixed, then $a_{D_i}$ decreases. Also $E[e^{A_1/D_i}]$
decreases and its effect is amplified by power $D_i$. Thus, we expect its
power to decrease. But if $D_i$ is fixed and any other $D_j$ increases, 
this exponentially increases the power required at $D_i$.

Thus, we conclude that if we increase the deadline of a node, 
its power requirement decreases but it will increase exponentially the power requirement of all other nodes.

Thus if path length $|\mathcal{P}|>2$, we should set the deadline of each link
as 1 even if $|\mathcal{P}|<D$. 

We illustrate the above procedure by considering the important 
special case where the only constraint on transmission is that a 
node can only transmit or receive and that too on only one of the 
links at a time.Then for any path $\mathcal{P}$, we have two independent
sets $S_1$ and $S_2$. We can define $S_1=\{1,3,5,\cdots\}$ and
$S_2=\{2,4,6,\cdots\}$ on our path. As argued above, we take $D_1=D_2=1$.
Let us assign slots $[k,k+1)$ to $S_1$ if $k$ is odd and $S_2$ otherwise. 
Then power needed by node $i$ on the path is
\begin{equation}
E[e^{A_1+A_2}-1]E\left[\frac{1}{H_i}\right] = (E[e^{A_1}]^2-1)E\left[\frac{1}{H_i}\right] 
\label{eq4}
\end{equation}
for all $i$.

If $|\mathcal{P}|=1$ and we take $D_1=1$, then the power needed is 
\begin{equation}
(E[e^A]-1)E\left[\frac{1}{H_1}\right]. 
\label{eq5}
\end{equation}

Thus to minimize power, we should choose one hop path over $\mathcal{P}$ if 
\begin{equation}
(E[e^A]-1)E\left[\frac{1}{H_1}\right] \leq (E[e^{A_1}]^2-1)\sum_{i\in \mathcal{P}}E\left[\frac{1}{H_i}\right]
\label{eq6}
\end{equation}
and hence if
\begin{equation}
E\left[\frac{1}{H_1}\right] \leq (E[e^{A_1}]+1)\sum_{i\in \mathcal{P}}E\left[\frac{1}{H_i}\right].
\label{eq7}
\end{equation}
Often it will be true unless the traffic is very low. If this condition is violated, then we should look for a path with minimum $\sum_{i\in \mathcal{P}}E\left[\frac{1}{H_i}\right]$. This can be computed via Dijkstra's algorithm by keeping the cost of each link with channel gain $H$ as $E\left[\frac{1}{H}\right]$.

We illustrate this special case with an example.
Let us take the end-to-end deadline $D=10$ slots.
We consider a network with $15$ nodes.  Source node is $1$ 
and the destination node is $9$. 
 To form the graph, we generated a random binary  matrix $\textbf{C}=[c_{ij}]$ of size $ 15 \times 15$.
 Its element $c_{ij}$  is $1$ if there is a link from  node $i$ to  
 node $j$; otherwise $c_{ij}=0$. 
We compute the optimal path from the source node to the destination
node using Dijkstra algorithm by taking weight on link $(i,j)$ to be
$E\left[\frac{1}{H_{ij}}\right]$. 
Then, for our channel gain distributions (not provided here to conserve space), the optimal 
route $\mathcal{P}$ obtained is $ 1 \rightarrow 5 \rightarrow 7 \rightarrow 9$.
Channel gains on links $(1,5)$, $(5,7)$ and $(7,9)$ are 
$\{2$, $3$, $4$, $5\}$, $\{0.2$, $0.5$, $0.8$, $1\}$,
$\{2$, $2.5$, $2.9$, $3.5\}$ and occur with equal probability. The arrival
process at the  source node takes values from the set $\{1,2,3\}$ with equal probabilities.

{    
The number
of independent sets, taking only the constraints that a node can either 
transmit or receive on a single link in a slot, are two: Set $S_1=\{(1,5), (7,9)\}$ and $S_2=\{(5,7)\}$.
Then end-to-end hard deadline of $D$ is split into the deadlines $D_{i}$ on these links with
$D_{i} \geq 1$ and $\sum_{i\in \mathcal{P}} D_{i} \leq D$.
 }


Since $D_{i} \geq 1$, for this path, the end-to-end deadline $D$ for packets needs to be  $ \geq 4$ (we will see it below).
If this condition is not met 
then we can compute the 
second least cost path (say via \cite{kpath}) and keep finding successive least cost paths till we get one with hop count $\leq D-1$
{(for $D>1$; for $D = 1$, we need a direct link from source to destination).}
If no such path exists then the deadline $D$ is not feasible in this network.

By taking $D_{1}=3$, $D_{2}=4$, we get $D_{1}+D_{2}+D_{3}=10$
slots. 
If we insist on keeping end-to-end deadline 10, we can show that this is the optimal breakup of the deadlines. For this the 
theoretical and simulated average power on links $(1,5)$, $(5,7)$ and $(7,9)$
are $(128.2,129.5)$, $(310.8,312.2)$ and $(155.3,156.1)$ respectively.

But now we take $D_{1}=D_{2}=1$. Then, the theoretical and simulated average power on links $1$, $2$ and $3$ 
are $(32.1,32)$, $(231.9,229.6)$ and $(38.5, 38.3)$ respectively.
 This is far lower than if we insists on a deadline
of $10$. 

Now we illustrate some other properties mentioned above based on which we concluded that $D_{i}=1$
is the best 
for the overall sum of the power. In Table I, we provide the powers needed at node $1$, $5$ and $7$ when  $D_{1}=1$ and 
$D_{2}$ is increased. As we claimed, the power at node $5$ decreases but power at nodes $1$ and $7$ increases exponentially. 
The total power also increases substantially. We have seen a similar effect if $D_{2}$ is fixed $=1$ and
$D_{1}$ is increased. Then (see Table II) the power at node $1$ and $7$ decreases but at node $5$ increases exponentially. 
The total end-to-end power increases drastically.  This verifies the claims we made above.

\begin{table}[h]  \label{table61}
\centering
\caption{{Single user, Multihop: $D_1=1$, powers at different links}}
  \resizebox{\columnwidth}{!}{ 
    \begin{tabular}{ | c | c | c | c |c| }
    \hline
    &$L_{1,5}$ & $L_{5,7}$ & $L_{7,9}$  \\ \hline
    $D_2=1$ & $ 32$ & $ 231.9 $ & $ 38.3 $  \\ 
    $D_2=2$ & $ 329 $  & $ 33.5 $ & $ 389 $  \\
    $D_2=3$ & $ 3.2 \times 10^3 $ & $ 19.1$ & $ 3.9 \times 10^3 $  \\
    $D_2=4$ & $ 3.3 \times 10^4 $ & $ 14.4 $  & $ 3.95 \times 10^4 $  \\
     $D_2=5$ & $ 3.5 \times 10^5 $ & $ 12 $  & $ 3.97 \times 10^5$  \\
    \hline
    \end{tabular}
    \vspace{0.05cm}
    
    }
\end{table}

\begin{table}[h]  \label{table61}
\centering
\caption{{Single user, Multihop: $D_2=1$, powers at different links}}
  \resizebox{\columnwidth}{!}{ 
    \begin{tabular}{ | c | c | c | c |c| }
    \hline
    &$L_{1,5}$ & $L_{5,7}$ & $L_{7,9}$  \\ \hline
    $D_1=1$ & $ 32$ & $ 231$ & $ 38.3 $  \\ 
    $D_1=2$ & $ 17 $  & $ 2.33 \times 10^3 $ & $ 18.76 $  \\
    $D_1=3$ & $ 15.9 $ & $ 2.34 \times 10^4 $ & $ 17.94 $  \\
    $D_1=4$ & $ 16.9 $ & $ 2.38 \times 10^5 $  & $ 17.82 $  \\
     $D_1=5$ & $ 18.4 $ & $ 2.4 \times 10^6 $  & $ 17.6 $  \\
    \hline
    \end{tabular}
    \vspace{0.05cm}
    
    }
\end{table}

\section{Multiuser, Multihop Network} \label{concave-case5}
 In this section, we consider the case, when
there are multiple source-destination pairs and the data arrives
 in every slot at the source nodes and it has the same hard deadline for every packet
 of a given flow $f$. We  demonstrate our algorithm via an example.
 
 Consider the example
 of $15$ nodes of Section V. There are two source-destination pairs $(1,8)$ and $(2,6)$. 
 We have no direct links between source-destination pair.
 Thus, we compute the optimal paths for each source-destination pair via Dijkstra's algorithm
 where the cost of each link is $E\left[\frac{1}{H_{ij}}\right]$. The optimal path for source 1 is
 $1 \rightarrow 4 \rightarrow 5 \rightarrow 7 \rightarrow 8 $ and for source 2 is $2 \rightarrow 4 \rightarrow 5 \rightarrow 6.$

 To compute the energy, we also need to know $D_{i,j}$ (deadline corresponds to node $i$ to node $j$) the deadline we fix for each link.
 Based on single user results, we take $D_{i,j}=1$ for all links.
 
 Right now we just keep the two paths $1\to 4\to 5\to 7\to 8$ and $2 \to 4\to 5 \to 6$ to explain the rest of the algorithm.

 Let the  end-to-end hard deadline delay for flows $1$ and $2$ be $14$ and $10$  respectively.
  {  Channel gains on links $(1,4)$, $(4,5)$, $(5,7)$, $(7,8)$, $(2,4)$, $(5,6)$
 take value from sets $\{0.8$, $1.6$, $2.4$, $3.2$, $4  \}$, 
 $\{0.6$, $1.2$, $1.8$, $2.4$, $3\}$,
 $\{0.7$, $1.4$, $2.1$, $2.8$, $3.5 \}$,
 $\{0.9$, $1.8$, $2.7$, $3.6$, $4.5\}$,
 $\{1$, $2$, $3$, $4$, $5 \}$, $\{0.8$, $1.6$, $2.4$, $3.2$, $4 \}$.}
 
{Arrivals  for the both flows take  value from the set $\{ 1, 2, 3\}$ nats with equal probabilities
 and the end-to-end hard deadline for both flows is 10. }
 
{Independent sets for this example are $S_1=\{(1,4)$, $(5,6)$,  $(7,8) \}$,
 $S_2=\{ (2,4)$, $(5,7)\}$ and $S_3=\{(4,5)$, $(7,8)\}$. 
 First, we allocate
 one time slot per set  as follows:
 $S_1$, $S_2$, $S_3$, $S_1$, $S_2$, $S_3$, ....
 Each node transmits all its data in the slot alloted to it. 
 Then  flow $1$ experiences a maximum hard deadline delay  of $8$ slots and flow $2$ 
 experiences a maximum hard deadline delay of $5$ slots. Thus both the deadlines are met and we are done. }


{ The simulated and theoretical average power consumption for the above schemes for link
 $(1,4)$ is $(587$, $581)$, for
 link $(4,5)$ is $(7.9 \times 10^5$, $7.91 \times 10^5 )$, for link $(5,7)$ is $(658$, $664)$,
 link $(7,8)$ is $(519$, $516)$, link $(2,4)$ is $(469$, $465)$, and 
 link $(5,6)$ is $(576$, $581 )$.} 

 We can improve over this path selection by noticing that link $4 \to 5$ is common on the two paths.
 This increases the cost of this link. This can be taken into account as follows.
 
 To compute the optimal path for source $2$, we keep the cost of the links which are 
 not on this path as $E\left[\frac{1}{H_{ij}}\right]$. But the cost of any of the
 links on this path is obtained by considering the increase in energy needed on these links in
 transmitting data of source $2$ in addition to that of source $1$ (e.g., compute 
 the mean energy needed on link $(4,5)$ via Theorem $1$ when both the sources send
 data through it minus the energy when only source $1$ transmits). Of course we could have 
 taken the reverse order of first selecting route for source $2$ and then source $1$.
 Then we should keep the  paths for the two flows which provide the lowest sum energy.

{ Finally we have the following steps for our algorithm. First we find the routes for the different
flows via Dijkstra's algorithm as explained above, including the incremental cost mentioned above. In the next 
 step, we allocate one slot transmission time duration 
 for each set in round robin fashion.
 If we  meet the hard deadline constraints
 for all flows we are done; otherwise we find next best route for the flow(s), 
 till all flows meet their end-to-end hard deadline constraints.}
 
{One method to reduce the overall average power consumption is to reduce the load on the link which consumes more
 average power consumption. 
 This can be done by removing some flows on this link.
 One can also reduce the overall average power consumption of the system by reducing the number 
 of independent sets for the system. }

 




\section{Conclusions} \label{s4_conclusions}

We have  considered the 
problem of minimizing the average power
in the presence of hard deadline constraints. We consider the case
when the rate satisfies the generalized Shannon's formula. We have obtained  closed-form
optimal solutions when the data comes at the beginning of the frame
and should be served by the end of the frame. We have also obtained
closed-form optimal solutions when the data comes in 
every slot of the frame and should be served within the frame.
We have extended our
single user, single
server results to a single/multi user, multihop network when every user has its own end-to-end 
hard deadline constraint.

\end{document}